\newtheoremstyle{mythm}{1.5ex plus 1ex minus .2ex}{1.5ex plus 1ex minus .2ex} {\rm}{\parindent}{\it\it}{\rm{:}}{1em}{}
\theoremstyle{mythm}
\newtheorem{thm}{Theorem}[section]
\newtheorem{lem}{Lemma}[section]
\newtheorem{prop}{Proposition}[section]
\newtheorem{defn}{Definition}[section]
\begin{document}
%
\title{Noise-induced synchronization of Hegselmann-Krause dynamics in full space}

%
%
%
%

\author{Wei Su, Jin Guo, Xianzhong Chen,  Ge Chen, ~\IEEEmembership{Senior Member,~IEEE}
\IEEEcompsocitemizethanks{\IEEEcompsocthanksitem
This work is supported by the National Key Basic Research Program
of China (973 program) under the grant 2016YFB0800404, the National Natural Science Foundation of China under grants Nos. 61803024, 61671054, 61773054, 11688101, 91427304, the Natural Science Foundation of Beijing under grants Nos. 4182038,  the Fundamental Research Funds for the Central Universities under the grant No. FRF-TP-17-087A1.
\IEEEcompsocthanksitem
W. Su, J. Guo and X. Chen are with the School of Automation and Electrical Engineering, University of Science and Technology Beijing \& Key Laboratory of Knowledge Automation for Industrial Processes, Ministry of Education, Beijing 100083, China, {\tt
suwei@amss.ac.cn, guojin@amss.ac.cn, cxz@ustb.edu.cn}. G. Chen is with the National Center for Mathematics and Interdisciplinary Sciences \& Key Laboratory of Systems and
Control, Academy of Mathematics and Systems Science, Chinese Academy of Sciences, Beijing 100190,
China, {\tt chenge@amss.ac.cn}
}}

\IEEEtitleabstractindextext{%
\begin{abstract}
The Hegselmann-Krause (HK) model is a typical self-organizing system with local rule dynamics. In spite of its widespread use and numerous extensions, the underlying theory of its synchronization induced by noise still needs to be developed. In its original formulation, as a model first proposed to address opinion dynamics, its state-space was assumed to be bounded, and the theoretical analysis of noise-induced synchronization for this particular situation has been well established.
However, when system states are allowed to exist in an unbounded space, mathematical difficulties arise whose theoretical analysis becomes non-trivial and is as such still lacking. In this paper, we completely resolve this problem by exploring the topological properties of HK dynamics and by employing the theory of independent stopping time. The associated result in full state-space provides a solid interpretation of the randomness-induced synchronization of self-organizing systems.
\end{abstract}

\begin{IEEEkeywords}
Noise-induced synchronization, Hegselmann-Krause dynamics, full space, self-organizing systems
\end{IEEEkeywords}}

\maketitle

\IEEEdisplaynontitleabstractindextext

%
\IEEEpeerreviewmaketitle

\ifCLASSOPTIONcompsoc
\IEEEraisesectionheading{\section{Introduction}\label{intro}}
\else
\section{Introduction}\label{intro}
\fi

\IEEEPARstart{I}{n} the past decades, self-organizing systems based on local rules have been used to investigate the collective behavior in natural and social systems, and several models have been proposed, including the widely known Boid and Vicsek models \cite{Reynolds1987,Vicsek1995}. One of the central
issues in the study of collective behavior of self-organizing systems is synchronization. Due to the difficulty of its analysis, most previous theoretical studies on synchronization of self-organizing systems largely ignored the influence of noise \cite{Jad2003, Savkin2004,Tang2007,Chen2014}. However, as Sagu\'{e}s \emph{et al.} explained in \cite{Sagues2007}, ``natural systems are undeniably subject to random fluctuations, arising from either environmental variability or thermal effects''. Moreover, after Heinz von Foerster proposed the principle of ``order from noise'' in 1960 \cite{Foerster1960}, noise has been believed to be a key factor in promoting the synchronization of self-organizing systems, which has been verified in earlier simulation studies \cite{Vicsek1995,HadStauffSchulz2008,HadStauffHan2015}. The corresponding mathematical analyses are, however, only more recent. For example, the analysis of the Vicsek model subject to noise was first carried out by Chen in 2017 \cite{Chen2015}. Beyond that, to the best of our knowledge, a substantial mathematical study on how noise affects the synchronization of self-organizing systems has been infrequent, though admittedly, considerable attention has never ceased to exist in a number of fields \cite{Sagues2007,Shinbrot2001,Matsumoto1983,Eldar2010,Tsimring2014,Zhou2005,Lichtenegger2016,Guo2017,Shirado2017}.

Very recently, we established a theoretical analysis of noise-induced synchronization based on the widely known Hegselmann-Krause (HK) model of opinion dynamics \cite{Su2016}. In the HK model, each agent possesses a bounded confidence and updates its opinion value by averaging the opinions of its neighbors who are located within its confidence region. In spite of its seeming simplicity, the HK model captures a quite fundamental local rule of evolution which is embodied ubiquitously in self-organizing systems, such as the Boid and Vicsek models, and has been largely explored in its deterministic version \cite{Hegselmann2002,Lorenz2005,Krause2000,Etesami2015}. In \cite{Su2016}, via a rigorous analysis, we established for the first time that random noise can enable the HK system to reach synchronization (called \emph{quasi-synchronization} due to noise), and we also obtained a ``critical'' noise strength for quasi-synchronization. Subsequently, Su and Yu \cite{Su2017free} analyzed a truth-seeking HK model with environmental noise, and proved that even a small amount of noise can drive all agents in the system towards a state of attained truth.

The analysis of noisy HK models in previous studies was subject to an assumption that all agents' opinions were limited to a bounded interval \cite{Su2016,Pineda2013,Su2017free,Wang2017,Garnier2017}. In particular, the boundedness assumption was crucial to the proof of noise-induced synchronization of HK dynamics in \cite{Su2016}. When the state-space is bounded, the system has a \emph{uniform} positive probability to reach quasi-synchronization in a finite period from \emph{any} initial state. However, when the state of a noisy HK model is allowed to exist in the full space, the system has no uniform positive probability to attain quasi-synchronization in a finite period from any initial state, leading thus to invalidation of the existing methods.

In this paper, via exploring the topological property of a noisy HK system in full state-space, and using the theory of independent stopping time, we show that for any initial state, the system will reach a state whose neighboring graph consists of all complete subgraphs, with a uniform positive probability in a finite period. Additionally, given any initial state whose neighboring graph consists of all complete subgraphs, we prove that the system will achieve quasi-synchronization with a uniform positive probability in an almost surely (a.s.) finite stopping time. Combining the two conclusions leads us to the final answer. Importantly, we wish to stress here that finding the uniform positive probability in a finite stopping time is essentially a new skill which may extend the idea of ``joint connectivity in a finite period'' to the ``joint connectivity in a finite stopping time'' in the consensus of multi-agent systems.

Besides this novel mathematical achievement, another highlighting contribution of this paper is its physical significance in providing a theoretical interpretation of the noise-induced synchronization of self-organizing systems. Though the HK model with bounded state-space performs generally well in mimicking opinion behavior, the imposed assumption of boundedness of the state-space is undesirable in physical systems and is largely limiting its potential for representing an elementary self-organizing system.

The rest of the paper is organized as follows: Section \ref{formulation} presents some preliminaries of the underlying model; in Section \ref{Results}, we give the main results of the paper; Section \ref{Simulations} shows simulation results that verify the main theoretical conclusions, and finally, some concluding remarks are given in Section \ref{Conclusions}.

\renewcommand{\thesection}{\Roman{section}}
\section{Model and definitions}\label{formulation}
\renewcommand{\thesection}{\arabic{section}}
Denote $\mathcal{V}=\{1,2,\ldots,n\}$ as the set of $n$ agents, $x_i(t)\in(-\infty,\infty), i\in\mathcal{V}, t\geq 0$ as the state of agent $i$ at time $t$. The update rule of HK dynamics then takes:
\begin{equation}\label{basicHKmodel}
  x_i(t+1)=\frac{1}{|\mathcal{N}_i(x(t))|}\sum\limits_{j\in\mathcal{N}_i(x(t))}x_j(t)+\xi_i(t+1),\,\,\,i\in\mathcal{V},
\end{equation}
where
\begin{equation}\label{neigh}
 \mathcal{N}_i( x(t))=\{j\in\mathcal{V}\; \big|\; |x_j(t)-x_i(t)|\leq \epsilon\}
\end{equation}
is the neighbor set of $i$ at $t$, $\epsilon>0$ represents the confidence threshold of the agents and $\xi_i(t), i\in\mathcal{V}, t\geq 1$ is noise. Here, $|\cdot|$ can be the cardinal number of a set or the absolute value of a real number.

In \cite{Su2016}, the state-space is assumed to be bounded, i.e. $x_i(t)\in[0,1], i\in\mathcal{V}, t\geq 0$. If there is no noise, it is proved that for any given initial opinion value $x(0)\in[0,1]^n$, the evolutionary opinion values $x(t), t\geq 0$ of the noise-free HK model cannot exceed the initial boundary opinions. However, in the presence of noise, mathematically, the evolutionary opinion values can be driven to run outside the initial boundary opinions, and even outside the opinion space $[0,1]$. In \cite{Su2016}, to limit the noisy opinion values in $[0,1]$, it forcibly assumes that $x_i(t+1)=0$ or 1 when $\frac{1}{|\mathcal{N}_i(x(t))|}\sum\limits_{j\in\mathcal{N}_i(x(t))}x_j(t)+\xi_i(t+1)$ is less than 0 or larger than 1. In model (\ref{basicHKmodel}), this assumption is cancelled and the state-space is allowed to be unbounded.

To proceed, some preliminary definitions are first needed.
\begin{defn}\label{graphdef}
Let $\mathcal{G}_\mathcal{V}(t)=\{\mathcal{V},\mathcal{E}(t)\}$ be the graph of $\mathcal{V}$ at time $t$, and $(i,j)\in\mathcal{E}(t)$ if and only if $|x_i(t)-x_j(t)|\leq \epsilon$. A graph $\mathcal{G}_\mathcal{V}(t)$ is called a {\it complete graph} if and only if $(i,j)\in\mathcal{E}(t)$ for any $i,j\in\mathcal{V}$; and $\mathcal{G}_\mathcal{V}(t)$ is called a {\it connected graph} if and only if for any $i\neq j$, there are edges $(i,i_1), (i_1,i_2), \ldots, (i_k,j)$ in $\mathcal{E}(t)$.
\end{defn}
The definition of \emph{quasi-synchronization} of the noisy model (\ref{basicHKmodel})-(\ref{neigh}) is given by \cite{Su2016}:
\begin{defn}\label{robconsen}
Denote
\begin{equation*}\label{opindist}
  d_{\mathcal{V}}(t)=\max\limits_{i, j\in \mathcal{V}}|x_i(t)-x_j(t)|~~\mbox{and}~~d_{\mathcal{V}}=\limsup\limits_{t\rightarrow \infty}d_{\mathcal{V}}(t).
\end{equation*}
(i) if $d_{\mathcal{V}} \leq \epsilon$, we say the system (\ref{basicHKmodel})-(\ref{neigh}) will reach quasi-synchronization.\\
(ii) if $P\{d_{\mathcal{V}} \leq \epsilon\}=1$, we say almost surely (a.s.) the system (\ref{basicHKmodel})-(\ref{neigh}) will reach quasi-synchronization.\\
(iii) if  $P\{d_{\mathcal{V}} \leq \epsilon\}=0$, we say a.s. the system (\ref{basicHKmodel})-(\ref{neigh}) cannot reach quasi-synchronization.\\
(iv) let $T=\min\{t: d_{\mathcal{V}}(t')\leq \epsilon \mbox{ for all } t'\geq t\}$.
 If $P\{T<\infty\}=1$, we say a.s. the system (\ref{basicHKmodel})-(\ref{neigh}) reaches quasi-synchronization in finite time.
\end{defn}

\renewcommand{\thesection}{\Roman{section}}
\section{Main Results}\label{Results}
\renewcommand{\thesection}{\arabic{section}}
For simplicity, we first present the result for quasi-synchronization with independent and identically distributed (i.i.d.) noises, which can be directly derived from the two subsequent general results with independent noises.
\begin{thm}[Critical noise amplitude for quasi-synchronization of HK model with i.i.d. noise]\label{consthm0}
Let $\{\xi_i(t)\}_{i\in\mathcal{V},t\geq 1}$ be non-degenerate random variables with independent and identical distribution, then for any $x(0)\in (-\infty,\infty)^n$ and $\epsilon>0$,
 \begin{enumerate}
   \item[(i)] if there exist constants $\delta_1,\delta_2\in(-\infty,\infty)$ with $\delta_2-\delta_1=\epsilon$ such that $P\{\delta_1\leq\xi_1(1)\leq \delta_2\}=1$, then a.s. the system (\ref{basicHKmodel})-(\ref{neigh}) will reach quasi-synchronization in finite time;
   \item[(ii)]  if
    $P\{\delta_1\leq\xi_1(1)\leq \delta_2\}<1$ for any $\delta_2-\delta_1=\epsilon$,
then a.s. the system (\ref{basicHKmodel})-(\ref{neigh}) cannot reach quasi-synchronization.
 \end{enumerate}
\end{thm}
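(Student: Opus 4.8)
The plan is to reduce the claim to the event that all agents eventually lie in a single complete graph and then to produce that event by the two-stage mechanism announced in the Introduction; I describe the ingredients in turn. First observe a one-step invariance for (i): if $\mathcal{G}_{\mathcal{V}}(t)$ is complete then $\mathcal{N}_i(x(t))=\mathcal{V}$ for all $i$, so \eqref{basicHKmodel} reads $x_i(t+1)=\bar{x}(t)+\xi_i(t+1)$ with $\bar{x}(t)=\frac1n\sum_j x_j(t)$; under the hypothesis of (i) every noise lies in $[\delta_1,\delta_2]$ almost surely, hence all $x_i(t+1)$ lie in an interval of length $\delta_2-\delta_1=\epsilon$ and $\mathcal{G}_{\mathcal{V}}(t+1)$ is again complete. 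By induction $d_{\mathcal{V}}(t')\le\epsilon$ for every $t'\ge t$, so for (i) it suffices to reach a complete graph in finite time almost surely. For (ii), the assumption that no length-$\epsilon$ interval carries full mass forces the support of $\xi_1(1)$ to have diameter strictly larger than $\epsilon$; since $Z:=\xi_1(1)-\xi_2(1)$ is symmetric, there is $\eta_0>0$ with $P\{Z>\epsilon+\eta_0\}=P\{Z<-(\epsilon+\eta_0)\}=:q>0$. Writing $x_i(t+1)=A_i(x(t))+\xi_i(t+1)$ with $A_i$ the local average in \eqref{basicHKmodel}, the difference $x_1(t+1)-x_2(t+1)$ equals an $\mathcal{F}_t$-measurable constant $c$ plus the noise increment, where $\mathcal{F}_t=\sigma(x(0),\xi(s):s\le t)$; symmetry gives $P\{|c+Z|\ge\epsilon+\eta_0\}\ge q$ uniformly over real $c$, so $P\{d_{\mathcal{V}}(t+1)\ge\epsilon+\eta_0\mid\mathcal{F}_t\}\ge q$ for all $t$. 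The conditional Borel--Cantelli lemma then yields $d_{\mathcal{V}}(t)\ge\epsilon+\eta_0$ infinitely often almost surely, whence $\limsup_t d_{\mathcal{V}}(t)>\epsilon$ a.s. and $P\{d_{\mathcal{V}}\le\epsilon\}=0$, proving (ii).

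For (i) I would first establish a uniform lower bound: there are an integer $T_1$ and $p_1>0$, depending only on $n$ and $\epsilon$, such that from \emph{any} initial state the neighboring graph becomes a disjoint union of complete subgraphs within $T_1$ steps with probability at least $p_1$. Uniformity, which is exactly what fails in full space for the global spread, is restored by the observation that each connected component has diameter at most $(n-1)\epsilon$. I would force, over $T_1$ consecutive steps, the favorable event that all noises fall in one short interval $[c,c+\eta]$; on it the increments are near-common translations, the relative configuration shadows the deterministic HK dynamics, and the known finite-time termination of deterministic HK collapses each component to a tight cluster while the strict separation of distinct components (a non-expanding, hence open, condition) survives. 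Taking $T_1$ to be a uniform deterministic termination bound and $\eta$ small makes the clique-partition condition hold at time $T_1$, and the favorable event has probability bounded below uniformly.

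The harder proposition is that, starting from a clique partition, the process reaches a single complete graph with uniform positive probability at an almost surely finite stopping time. While the clusters stay separated, the invariance above keeps each cluster $C$ complete and isolated, so its centroid evolves as $\bar{x}_C(t+1)=\bar{x}_C(t)+\frac1{|C|}\sum_{i\in C}\xi_i(t+1)$; the difference of any two centroids is then a one-dimensional, zero-mean random walk with bounded increments, hence recurrent, and the time at which a chosen adjacent pair first comes within a fixed threshold is an almost surely finite stopping time. At that instant a favorable noise step merges the pair into one tight clique with uniformly positive probability, lowering the cluster count by one, and at most $n-1$ such merges produce a single complete graph. \textbf{This is where the difficulty concentrates}: one must define the stopping time so that it remains almost surely finite on the unfavorable paths, and must show the successive merges proceed without uncontrolled interference from the other clusters while retaining a lower bound $p_2>0$ independent of the unbounded initial spread. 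This is precisely the passage from ``joint connectivity in a finite period'' to ``joint connectivity in a finite stopping time'', and it is where the theory of independent stopping time enters.

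Finally I would concatenate the two stages as independent trials. By the strong Markov property, conditioned on the configuration reached at the end of a failed trial the next trial succeeds with probability at least $p_1p_2>0$, uniformly; hence the number of trials before the first success is dominated by a geometric variable and is almost surely finite, each trial lasting an almost surely finite time. Once a complete graph is reached, the invariance of part (i) keeps $d_{\mathcal{V}}(t)\le\epsilon$ forever, so $P\{T<\infty\}=1$ in the sense of Definition~\ref{robconsen}(iv), completing (i). The delicate point in this last step, again handled by the independent-stopping-time framework, is to make the concatenation of successive a.s.-finite stopping times legitimate so that the uniform bound $p_1p_2$ may be applied at each restart.
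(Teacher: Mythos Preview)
Your overall architecture---invariance of the complete graph, a two-stage mechanism (first reach a clique partition, then merge cliques via recurrence of centroid differences), geometric concatenation of trials, and a Borel--Cantelli argument for (ii)---matches the paper exactly, and your treatment of (ii) is in fact slightly cleaner than the paper's. The difference worth flagging is \emph{how} you implement the two stages of (i).

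For Stage~1 you propose to confine all noises to a short interval $[c,c+\eta]$ so that the relative configuration ``shadows the deterministic HK dynamics'' and inherits its finite-time termination. This step is fragile: HK is discontinuous in the state, since the averaging map jumps whenever $|x_i-x_j|$ crosses $\epsilon$. A perturbation of size $\eta$, however small, can flip neighbor relations near such thresholds, after which the perturbed trajectory need not track the deterministic one at all; the uniform termination bound for noise-free HK does not transfer. The paper avoids this by not shadowing anything: it forces a \emph{contraction protocol} (their (\ref{noiseproto0})) in which, on each connected-but-not-complete component, agents whose local average lies in the lower half receive noise $\ge\frac{\delta_1+\delta_2}{2}+a$ and those in the upper half receive noise $\le\frac{\delta_1+\delta_2}{2}-a$. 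Using only the monotonicity Lemma~\ref{monosmlem}, this shrinks the component diameter by at least $2a$ per step irrespective of how the neighbor structure evolves, yielding a deterministic bound $L\le\lceil (n-1)^2\epsilon/(2a)\rceil$ and the uniform probability $p^{\,nL}$.

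The same protocol is what makes Stage~2 go through. When two complete clusters first interact, the union can have diameter close to $2\epsilon$ and is in general only connected, so a single ``favorable noise step'' cannot merge them; the paper runs the contraction protocol for $L_0\le\lceil (n-1)\epsilon/(2a)\rceil$ steps after each meeting time $T_j$, paying probability $p^{\,nL_0}$ each time and iterating at most $m-1\le n-1$ times. This also handles your ``interference from other clusters'' concern automatically, since the protocol contracts every connected component simultaneously. If you swap your shadowing step for this contraction protocol and replace the one-step merge by $L_0$ protocol steps, your outline becomes the paper's proof.
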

Conclusion (i) shows that if noise amplitude is no more than $\epsilon$, the system will a.s. achieve quasi-synchronization in finite time; Conclusion (ii) states that when noise amplitude has a positive probability to exceed $\epsilon$, the system will not reach quasi-synchronization. This implies $\epsilon$ is the critical noise amplitude to induce a quasi-synchronization. Conclusions (i) and (ii) can be directly derived from the following Propositions \ref{Suff_thm} and \ref{Nece_thm}, which present sufficient and necessary conditions, respectively, for independent noises:

\begin{prop}[Sufficient condition for quasi-synchronization of HK model with independent noise]\label{Suff_thm}
Let $\{\xi_i(t), i\in\mathcal{V},t\geq 1\}$ be independent random variables with $E\xi_i(t)=C\in(-\infty,\infty)$ and satisfy:
 i) $P\{\delta_1\leq\xi_i(t)\leq \delta_2\}=1$ with $0<\delta_2-\delta_1\leq \epsilon$;
 ii) there exist constants $a\in(0,\frac{\delta_2-\delta_1}{2}),p\in(0,1)$ such that
$P\{\xi_i(t)\geq \frac{\delta_2+\delta_1}{2}+a\}\geq p$ and $P\{\xi_i(t)\leq \frac{\delta_2+\delta_1}{2}-a\}\geq p$.
Then, for any initial state $x(0)\in (-\infty,\infty)^n$ and  $\epsilon>0$, the system (\ref{basicHKmodel})-(\ref{neigh}) will a.s. reach quasi-synchronization in finite time and $d_\mathcal{V}\leq \delta_2-\delta_1$ a.s.
\end{prop}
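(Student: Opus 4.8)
The plan is to decouple the difficulty caused by the unboundedness of the state space into two qualitatively distinct phases, exactly along the lines sketched in the introduction, and then to glue them together by the theory of independent stopping times. The crucial enabling observation is a purely topological one: at any time $t$, every connected component of $\mathcal{G}_\mathcal{V}(t)$ has diameter at most $(n-1)\epsilon$, since any two agents in the same component are joined by a path of at most $n-1$ edges, each of length at most $\epsilon$. Thus, although the global spread $d_\mathcal{V}(t)$ may be arbitrarily large, every component is confined to an interval of length $\leq(n-1)\epsilon$, a bound depending only on $n$ and $\epsilon$. This is what restores the uniformity that is lost once the global bounded state space of \cite{Su2016} is abandoned.

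First I would prove that, from an arbitrary $x(0)$, the process reaches within a number of steps $T_1=T_1(n)$ depending only on $n$, and with a probability bounded below by a constant $c_1=c_1(n,p)>0$, a configuration whose neighboring graph is a disjoint union of complete subgraphs. Because each component lives in an interval of length $\leq(n-1)\epsilon$ by the topological fact above, one can localize the bounded-space technique of \cite{Su2016} to each component separately: by condition ii), with probability at least $p$ per agent per step the noise pushes an agent below or above the midpoint $\frac{\delta_2+\delta_1}{2}$ by at least $a$, and a fixed favorable pattern of such pushes over finitely many steps contracts every component until its diameter drops below $\epsilon$, i.e. the component becomes complete. Since there are at most $n$ components, each requiring finitely many contraction steps, and since any merging of components during this phase only helps, the whole event has probability at least $c_1>0$ uniformly in $x(0)$.

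The second and novel phase starts from such a union-of-cliques configuration. The key structural remark is that a complete component is \emph{invariant}: if a component $C$ is complete at time $t$, then every agent of $C$ shares the common average, so $x_i(t+1)$ equals that average plus $\xi_i(t+1)$, and two such values differ by at most $\delta_2-\delta_1\leq\epsilon$; hence $C$ stays complete, with diameter $\leq\delta_2-\delta_1$, until it meets another component. Consequently the mean of each clique performs a random walk with bounded, non-degenerate increments of common mean $C$, so that the \emph{difference} of the means of two adjacent cliques is a driftless random walk with non-degenerate increments (condition ii) supplies the non-degeneracy through the disjointness and independence of the two agent groups). Such a one-dimensional walk is recurrent and will, in an a.s. finite time, bring the two cliques within distance $\epsilon$; at that random instant a favorable noise configuration, again of probability at least some $c_2>0$ by condition ii), fuses them and contracts the union back into a single clique. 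Iterating over the at most $n-1$ fusions, I obtain that from any union-of-cliques state, with uniform probability at least $c_2>0$, the system reaches $d_\mathcal{V}\leq\delta_2-\delta_1$ at an a.s. finite stopping time $\tau$, becoming a single complete graph; by the invariance remark this single complete graph is absorbing, so $d_\mathcal{V}\leq\delta_2-\delta_1$ persists forever after $\tau$.

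Finally I would splice the two phases together. Defining successive trials on disjoint blocks of the noise sequence, each trial running Phase~1 for $T_1$ steps and then Phase~2 until its stopping time, the strong Markov property together with the uniform lower bounds $c_1,c_2$ makes the trials succeed independently with probability at least $c_1c_2>0$, irrespective of the configuration at which a trial begins. A Borel--Cantelli / geometric argument then shows that some trial succeeds a.s. in finitely many rounds, and since the synchronized state is absorbing this yields $P\{d_\mathcal{V}\leq\delta_2-\delta_1\}=1$ together with $P\{T<\infty\}=1$, which is the assertion. The main obstacle is precisely the passage from a ``finite period'' to a ``finite stopping time'': in Phase~2 the cliques may be arbitrarily far apart, so no deterministic time horizon suffices, and one must instead show that the fusion events occur with a probability bounded below uniformly even though they are realized at random, a.s. finite times. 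Establishing this uniformity, namely that $c_2$ does not degrade as the inter-clique gaps grow, via the recurrence of the driftless gap walk, is the heart of the argument and the step I expect to require the most care.
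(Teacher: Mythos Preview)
Your proposal is correct and follows essentially the same two–phase decomposition as the paper: a uniform-probability contraction of every connected component to a clique (the paper's protocol (\ref{noiseproto0}) plays the role of your ``favorable noise pattern''), followed by an a.s.\ finite stopping-time argument for merging cliques via the mean-zero walk of gap values, and a final geometric gluing. The only cosmetic differences are that the paper invokes the Law of the Iterated Logarithm (Theorem~10.2.1 of \cite{Chow1997}) on the difference of the \emph{outermost} clique means rather than recurrence of adjacent gaps, and handles the splicing through explicit measurability of the stopping times with respect to the noise filtration rather than through the strong Markov property; since the noises are merely independent (not i.i.d.), the process is time-inhomogeneous, so your invocation of ``strong Markov'' should ultimately be cashed out in the same filtration language the paper uses.
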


\noindent{\it Proof of Theorem \ref{consthm0} (i):} Noting that for i.i.d. random variables $\{x_i(t), i\in\mathcal{V}, t\geq 1\}$ with $Ex_1(1)=0,\,Ex_1^2(1)>0$, there exist constants $a>0$ and $0<p\leq 1$, such that
\begin{equation*}
  P\{x_i(t)> a\}\geq p,\,\,\,P\{x_i(t)< -a\}\geq p,
\end{equation*}
the conditions in Proposition \ref{Suff_thm} can be satisfied. \hfill $\Box$

To prove Proposition \ref{Suff_thm}, some lemmas are need:
\begin{lem}\cite{Krause2000}\label{monosmlem}
Suppose $\{z_i, \, i=1, 2, \ldots\}$ is a nonnegative nondecreasing (nonincreasing) sequence.  Then for any $s\geq 0$, the sequence
$\{g_s(k)=\frac{1}{k}\sum_{i=s+1}^{s+k}z_i$, $k\geq 1\}$ is monotonically nondecreasing (nonincreasing) for $k$.
\end{lem}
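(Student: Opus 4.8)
The plan is to reduce the monotonicity claim to a single comparison between consecutive terms of $g_s(\cdot)$ and then invoke the monotonicity of $\{z_i\}$ directly. I would treat the nondecreasing case in full; the nonincreasing case is identical with every inequality reversed. Since a sequence is monotonically nondecreasing in $k$ precisely when each term is no smaller than its predecessor, it suffices to establish $g_s(k+1)\geq g_s(k)$ for every $k\geq 1$.

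To do this, I would abbreviate $S_k=\sum_{i=s+1}^{s+k}z_i$, so that $g_s(k)=S_k/k$ and $S_{k+1}=S_k+z_{s+k+1}$. The target inequality $\frac{S_k+z_{s+k+1}}{k+1}\geq \frac{S_k}{k}$ can then be cleared of denominators by multiplying through by the positive quantity $k(k+1)$; after cancelling the common term $kS_k$, it collapses to the equivalent statement
\begin{equation*}
z_{s+k+1}\geq \frac{1}{k}\sum_{i=s+1}^{s+k}z_i=g_s(k).
\end{equation*}
In words, $g_s$ fails to decrease at step $k+1$ exactly when the newly appended term is at least as large as the running average of the preceding block. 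This algebraic reduction is the one computational step in the argument, and I regard it as the crux, even though it is elementary.

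The displayed inequality is then immediate from the hypothesis: because $\{z_i\}$ is nondecreasing, $z_{s+k+1}\geq z_{s+i}$ holds for each $i=1,\ldots,k$, and averaging these $k$ inequalities gives $z_{s+k+1}\geq \frac{1}{k}\sum_{i=s+1}^{s+k}z_i$, as required. This settles the nondecreasing case, and the nonincreasing case follows verbatim with the inequalities reversed. I note that the nonnegativity of $\{z_i\}$ plays no role in this argument; the entire conclusion rests only on monotonicity, the intuition being that appending a term no smaller than all those already present cannot lower their average. The main obstacle, to the extent there is one, is simply recognizing that the comparison of consecutive averages reduces to comparing the new entry against the current mean, after which monotonicity finishes the proof at once.
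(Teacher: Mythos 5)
Your proof is correct, and since the paper supplies no argument of its own for this lemma --- it is quoted verbatim from the cited reference \cite{Krause2000} --- there is nothing in the paper to diverge from; your reduction of $g_s(k+1)\geq g_s(k)$ to $z_{s+k+1}\geq g_s(k)$, settled by averaging the termwise inequalities, is the standard proof. Your side remark that nonnegativity of $\{z_i\}$ is never used is also accurate: monotonicity alone carries the whole argument.
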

In what follows, let $\xi(t)=\{\xi_i(t), i\in\mathcal{V}\}$ and the ever appearing time symbols $t$ (or $T$, etc.) all refer to the random variables $t(\omega)$ (or $T(\omega)$, etc.) on the probability space $(\Omega,\mathcal{F},P)$, and will be still written as $t$ (or $T$, etc.) for simplicity.
\begin{lem}\cite{Su2016}\label{robconspeci}
For the system (\ref{basicHKmodel})-(\ref{neigh}) with conditions of Proposition \ref{Suff_thm} i), if there exists a finite time $0\leq T<\infty$ such that $d_\mathcal{V}(T)\leq \epsilon$, then we have
$d_\mathcal{V}(t)\leq \delta_2-\delta_1$ for $t>T$.
\end{lem}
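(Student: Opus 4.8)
The plan is to argue by induction on $t$, the crux being that the hypothesis $d_{\mathcal{V}}(T)\leq\epsilon$ is precisely the statement that the neighbor graph $\mathcal{G}_{\mathcal{V}}(T)$ is \emph{complete}, which collapses the averaging term in the update rule to a single value shared by all agents. First I would note that, by Definition \ref{graphdef}, $d_{\mathcal{V}}(T)=\max_{i,j\in\mathcal{V}}|x_i(T)-x_j(T)|\leq\epsilon$ forces $(i,j)\in\mathcal{E}(T)$ for every pair $i,j$, so that $\mathcal{N}_i(x(T))=\mathcal{V}$ for each $i\in\mathcal{V}$.

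With every neighbor set equal to $\mathcal{V}$, the update (\ref{basicHKmodel}) reduces to $x_i(T+1)=\frac{1}{n}\sum_{j\in\mathcal{V}}x_j(T)+\xi_i(T+1)$, in which the averaging term $\frac{1}{n}\sum_{j\in\mathcal{V}}x_j(T)$ is independent of $i$. Subtracting the equations for two agents $i$ and $j$ then cancels this common term and leaves $x_i(T+1)-x_j(T+1)=\xi_i(T+1)-\xi_j(T+1)$. Invoking the bounded-noise hypothesis i) of Proposition \ref{Suff_thm}, namely $\delta_1\leq\xi_i(T+1)\leq\delta_2$ almost surely, every such difference has absolute value at most $\delta_2-\delta_1$, and hence $d_{\mathcal{V}}(T+1)\leq\delta_2-\delta_1$.

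The remaining task is to propagate this bound to all $t>T$ by induction, and the single point worth stressing is why the induction closes. Because the one-step diameter estimate $\delta_2-\delta_1$ is itself no larger than $\epsilon$ (again by hypothesis i)), we obtain $d_{\mathcal{V}}(T+1)\leq\delta_2-\delta_1\leq\epsilon$, so the complete-graph property is \emph{self-perpetuating}: the configuration at time $T+1$ again has all neighbor sets equal to $\mathcal{V}$. The very same computation therefore applies with $T+1$ in place of $T$, and iterating yields $\mathcal{N}_i(x(t))=\mathcal{V}$ together with $d_{\mathcal{V}}(t)\leq\delta_2-\delta_1$ for every $t>T$. No substantive obstacle appears; the only care required is the inductive bookkeeping that carries forward both the diameter bound and the preservation of graph completeness at each step, the latter being exactly the ingredient that permits the argument to be repeated indefinitely.
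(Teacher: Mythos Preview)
Your argument is correct. The paper itself does not give a proof of this lemma; it is simply quoted from \cite{Su2016}, so there is no in-paper proof to compare against. Your induction---complete graph at time $T$ collapses the averaging term to a common value, hence $d_{\mathcal{V}}(T+1)\leq\delta_2-\delta_1\leq\epsilon$, which regenerates completeness---is exactly the standard (and essentially only) way to establish the result, and matches what is done in the cited reference.
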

The following lemma is key to obtain the \emph{uniformly} positive probability of reaching quasi-synchronization from \emph{any} initial states:
\begin{lem}\label{clusterconsen}
For system (\ref{basicHKmodel})-(\ref{neigh}) with conditions of Proposition \ref{Suff_thm} i), if there exists a finite time $ T$ and disjoint subsets $\mathcal{V}_k\subset\mathcal{V}, k=1,\ldots,m (1\leq m\leq n)$ such that $d_{\mathcal{V}_k}(T)\leq \epsilon, 1\leq k\leq m$, and for $k_1\neq k_2, \mathcal{V}_{k_1}\bigcap\mathcal{V}_{k_2}=\emptyset$, $|x_i(T)-x_j(T)|>\epsilon, i\in\mathcal{V}_{k_1}, j\in\mathcal{V}_{k_2}$, then there exist constants $0<p_0\leq 1, L_0>0$ and a finite stopping time series $T_i$ which is $\sigma(\xi((i-1)L_0+T+\sum_{j=1}^{i-1}T_j)+1,\ldots)-measurable, i=1,\ldots,m-1$ such that
$P\{d_\mathcal{V}(T+(m-1)L_0+T_1+\ldots+T_{m-1})\leq \delta_2-\delta_1\}\geq p_0$.
\end{lem}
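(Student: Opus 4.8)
The plan is to merge the $m$ clusters into a single tight cluster one at a time, over $m-1$ rounds, where round $i$ consists of a random waiting time $T_i$ followed by a deterministic window of length $L_0$, and to bound the success probability of each round below by a constant $q>0$ that does not depend on the (arbitrarily spread-out) configuration. Two structural facts drive this. As long as the current clusters stay pairwise more than $\epsilon$ apart, each cluster $\mathcal{V}_k$ is its own neighbor set, so every agent in it is sent to the common average $\bar x_k(t)$ plus its own noise; hence the cluster remains a complete subgraph (spread $\le\delta_2-\delta_1\le\epsilon$) and its center performs the random walk $\bar x_k(t+1)=\bar x_k(t)+\frac{1}{|\mathcal{V}_k|}\sum_{i\in\mathcal{V}_k}\xi_i(t+1)$. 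Since $E\xi_i(t)=C$ for all $i,t$, the gap between any two cluster centers is a mean-zero, bounded-increment random walk; and by condition (ii), at a single step the event that every agent of a chosen left cluster gets noise $\ge\frac{\delta_1+\delta_2}{2}+a$ while every agent of a chosen right cluster gets noise $\le\frac{\delta_1+\delta_2}{2}-a$ has probability at least $p^{n}$, on which that gap shrinks by at least $2a$.

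For the waiting phase of round $i$, I would define $T_i$ as the first time, counted from the end of round $i-1$, at which two of the current clusters come within distance $\epsilon$; this makes $T_i$ measurable with respect to $\sigma(\xi((i-1)L_0+T+\sum_{j=1}^{i-1}T_j)+1,\ldots)$, as required, and uses only noise independent of the earlier rounds. To prove $P\{T_i<\infty\}=1$, fix the two leftmost clusters: while no connection has occurred their gap is positive, yet by the previous paragraph and a conditional Borel--Cantelli argument the favorable $-2a$ step recurs infinitely often almost surely, which would drive a positive gap to $-\infty$ — impossible. Hence a connection must form in finite time, which is precisely the ``joint connectivity in a finite stopping time'' advertised in the introduction.

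For the merging phase of round $i$, observe that at $T_i$ every connected component has diameter at most $(2n-1)\epsilon$, since each center moved by a bounded amount $\frac{1}{|\mathcal{V}_k|}\sum_i\xi_i\in[\delta_1,\delta_2]$ in one step and each cluster has spread $\le\epsilon$; crucially this bound is uniform in the configuration. On this bounded window I would run a finite-step contraction: using condition (ii) to push the extreme agents of a multi-cluster component inward (via the averaging monotonicity of Lemma \ref{monosmlem}) while giving every other component cohesive noise that keeps it tight, there exist a deterministic $L_0$ and a constant $q>0$, depending only on $n,\epsilon,p,a,\delta_1,\delta_2$, so that with probability at least $q$ that component collapses to a single complete subgraph of diameter $\le\delta_2-\delta_1$ and the remaining clusters stay tight and pairwise more than $\epsilon$ apart. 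This strictly decreases the number of clusters while preserving the hypothesis of the lemma, so the procedure iterates; because the rounds use disjoint, hence independent, blocks of noise, the per-round bounds multiply and $P\{d_{\mathcal{V}}(T+(m-1)L_0+T_1+\cdots+T_{m-1})\le\delta_2-\delta_1\}\ge q^{m-1}=:p_0$.

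The main obstacle is the merging phase: I must exhibit a single favorable event, of probability bounded below uniformly in the configuration, that both contracts a connected component to diameter $\le\delta_2-\delta_1$ within a fixed number of steps and simultaneously prevents the other, possibly only marginally separated, clusters from being dragged into range. The reason this succeeds — and the essential point in passing from bounded to full space — is that at the connection time the relevant diameter is controlled by the fixed quantity $(2n-1)\epsilon$ rather than by the original spread, so the bounded-space contraction estimates of \cite{Su2016} apply verbatim on this window; and since contracting each component toward its own center only enlarges the gaps to its neighbors, merging and maintaining separation are compatible. Establishing a uniform $q$ for this combined event, together with the almost-sure finiteness of the stopping times above, is where the real work lies.
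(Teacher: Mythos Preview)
Your overall architecture mirrors the paper's proof: $m-1$ rounds, each a random waiting time $T_i$ (until some pair of clusters first comes within $\epsilon$) followed by a deterministic contraction window of length $L_0$, with the contraction driven by Lemma~\ref{monosmlem} and the noise lower bounds of condition~(ii). However, your argument for $P\{T_i<\infty\}=1$ has a genuine gap. You assert that because the favorable $-2a$ increment recurs infinitely often (via conditional Borel--Cantelli), the gap between the two leftmost cluster centers would be driven to $-\infty$. This inference is false: a mean-zero random walk has its negative increment occur infinitely often yet need not diverge to $-\infty$; the simple symmetric walk is the obvious counterexample. What is actually needed is that the gap process, being a centered random walk with increments in $[-(\delta_2-\delta_1),\delta_2-\delta_1]$ and variance bounded away from zero, satisfies $\liminf_{t\to\infty}=-\infty$ a.s. The paper gets this by invoking the Law of the Iterated Logarithm on $y_m(t)-y_1(t)$ (Theorem~10.2.1 of \cite{Chow1997}), and then uses the increment bound $\le\epsilon$ to trap the process in $(0,\epsilon]$ at some finite time. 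You already noted the mean-zero, bounded-increment structure; replace the Borel--Cantelli step with LIL (or any recurrence/oscillation result for nondegenerate centered walks) and the waiting phase goes through.

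On the merging phase, the ``main obstacle'' you set yourself --- keeping the remaining clusters out of range during the $L_0$ contraction steps --- is not how the paper proceeds, and imposing it makes your task unnecessarily hard. The paper's protocol~(\ref{noiseproto0}) constrains only the noises of agents currently in a connected-but-not-complete component $\mathcal{V}_c(t)$ and places no restriction whatsoever on the remaining complete clusters; if a bystander cluster drifts in during contraction it is simply absorbed into $\mathcal{V}_c$ and the protocol continues. Since any connected component has diameter at most $(n-1)\epsilon$ at every instant and the total number of merge events is at most $m-1$, the paper bounds the aggregate number of protocol steps by $(m-1)L_0$ with $L_0=\lceil(n-1)\epsilon/(2a)\rceil$ and takes $p_0=p^{n(n-1)L_0}$, charging only for the contraction moves. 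So you need not engineer a single event that simultaneously contracts one component and freezes the others; letting the bystanders roam freely already works.
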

\begin{proof}
Without loss of generality, suppose $T=0$ a.s. Then at the initial moment, the system forms $m$ subgroups with complete graphs, and by (\ref{basicHKmodel}), $d_{\mathcal{V}_k}(1)\leq \delta_2-\delta_1\leq \epsilon, k=1,\ldots,m$. Before one subgroup enters the neighbor region of another, for each $i\in\mathcal{V}_k, \,1\leq k\leq m$, we have
\begin{equation}\label{xitvk}
\begin{split}
  &x_i(t+1)=\frac{1}{|\mathcal{V}_k|}\sum\limits_{j\in\mathcal{V}_k}x_j(t)+\xi_i(t+1)\\
  =&\frac{1}{|\mathcal{V}_k|}\sum\limits_{j\in\mathcal{V}_k}x_j(0)+\sum\limits_{l=1}^{t}\frac{\sum\limits_{j\in\mathcal{V}_k}\xi_j(l)}{|\mathcal{V}_k|}+\xi_i(t+1).
  \end{split}
\end{equation}
Order the subgroups at any moment $t\geq 1$ by the state values as $1,2,\ldots,m$, and consider the subgroups $\mathcal{V}_1(1)$ with smallest state values and $\mathcal{V}_m(1)$ with the largest state values.
For $t\geq 0, \,k=1,\ldots,m$, let $$y_k(t+1)=\frac{1}{|\mathcal{V}_k|}\sum\limits_{j\in\mathcal{V}_k}x_j(0)+\sum\limits_{l=1}^{t}\frac{\sum\limits_{j\in\mathcal{V}_k}\xi_j(l)}{|\mathcal{V}_k|}.$$
Then for $t\geq 1$,
\begin{equation*}
\begin{split}
  y_m(t)-y_1(t)=&\frac{\sum_{j\in\mathcal{V}_m}x_j(0)}{|\mathcal{V}_m|}-\frac{\sum_{j\in\mathcal{V}_1}x_j(0)}{|\mathcal{V}_1|}\\
  &+\sum\limits_{l=1}^{t-1}\bigg(\frac{\sum_{j\in\mathcal{V}_m}\xi_j(l)}{|\mathcal{V}_m|}-\frac{\sum_{j\in\mathcal{V}_1}\xi_j(l)}{|\mathcal{V}_1|}\bigg).
  \end{split}
\end{equation*}
Since $\xi(t)=\{\xi_i(t), i\in\mathcal{V}\}, t\geq 1$ are independent, the $\sigma$-algebras $\sigma(\xi(t)), t\geq 1$ are independent. By Law of the Iterated Logarithm (Theorem 10.2.1 of \cite{Chow1997}), we have that
\begin{equation}\label{ytlimits}
\begin{split}
  \limsup\limits_{t\rightarrow\infty}(y_m(t)-y_1(t))&=\infty,\,\,a.s., \\
   \liminf\limits_{t\rightarrow\infty}(y_m(t)-y_1(t))&=-\infty,\,\,a.s.
   \end{split}
\end{equation}
Notice that $\Big|\frac{\sum_{j\in\mathcal{V}_m}\xi_j(l)}{|\mathcal{V}_m|}-\frac{\sum_{j\in\mathcal{V}_1}\xi_j(l)}{|\mathcal{V}_1|}\Big|\leq \delta_2-\delta_1\leq \epsilon$ a.s., by (\ref{ytlimits}), there exists a $\sigma_t$-time $0\leq T_0<\infty$ where $\sigma_t=\sigma(\xi(1),\ldots,\xi(t))$ that
\begin{equation}\label{ytepsi}
  0<y_m(T_0)-y_1(T_0)\leq \epsilon,\,\,a.s.
\end{equation}
Combining (\ref{xitvk}) and (\ref{ytepsi}), we obtain that there a.s. exists a $\sigma_t$-time $T_1\leq T_0$ such that at $T_1$, at least two subgroups with complete graphs will for the first time enter the neighbor region of one another and become a new complete or connected graph. Denote $\widetilde{x}_i(t)=|\mathcal{N}(i, x(t))|^{-1}\sum_{j\in \mathcal{N}(i, x(t))}x_j(t)$, $i\in\mathcal{V}, t\geq 0$ and let $\mathcal{V}_c(t)$ be the new emerging subgroups with connected but not complete graphs at $t$, then for $i\in\mathcal{V}_c(t)$ design the following protocol:
\begin{equation}\label{noiseproto0}
  \left\{
    \begin{array}{ll}
      &\xi_i(t+1)\in[\frac{\delta_2+\delta_1}{2}+a,\delta_2], \quad\hbox{if}\\
      &\,\, \min\limits_{j\in\mathcal{V}_c(t)}x_j(t)\leq\widetilde{x}_i(t)\leq \min\limits_{j\in\mathcal{V}_c(t)}x_j(t)+\frac{d_{\mathcal{V}_c(t)}(t)}{2}; \\
      &\xi_i(t+1)\in[\delta_1,\frac{\delta_2+\delta_1}{2}-a], \quad\hbox{if}\\
      &\,\,\min\limits_{j\in\mathcal{V}_c(t)}x_j(t)+\frac{d_{\mathcal{V}_c(t)}(t)}{2}<\widetilde{x}_i(t)\leq \max\limits_{i\in\mathcal{V}_c(t)}x_j(t).
    \end{array}
  \right.
\end{equation}
For all $\mathcal{V}_c(t)$, by (\ref{basicHKmodel}) and Lemma \ref{monosmlem}, we know that under protocol (\ref{noiseproto0}) the minimum state value of $\mathcal{V}_c(t)$ increases by at least $a$, the maximum state value of $\mathcal{V}_c(t)$ decreases by at least $a$, and $d_{\mathcal{V}_c(t)}(t)$ reduces by at least $2a$ after each step.
Moreover, we know that $d_{\mathcal{V}_c(t)}(t)\leq n\epsilon$, then under the protocol (\ref{noiseproto0}), there must exist a constant $L_0\leq \lceil\frac{(n-1)\epsilon}{2a}\rceil$ such that $d_{\mathcal{V}_c(T_1+L_0)}(T_1+L_0)\leq \epsilon$ a.s. (This also means protocol (\ref{noiseproto0}) occurs $L_0$ times).  Since there exist $m\leq n$ subgroups with complete graphs at the initial moment, by following the above procedure, we obtain that under the protocol (\ref{noiseproto0}), the whole group $\mathcal{V}$ will a.s. form a complete graph in a finite time $\bar{T}\leq \sum_1^{m-1}T_j+(m-1)L_0$ where $T_j$ is $\sigma(\xi((j-1)L_0+\sum_1^{i-1}T_j+1),\ldots)$-measurable, and during this process, protocol (\ref{noiseproto0}) occurs no more than $(m-1)L_0$ times. By independence of $\xi_i(t), i\in\mathcal{V}, t\geq 1$, we know that
$$
  P\{\text{protocol (\ref{noiseproto0}) occurs}~ (m-1)L_0 ~\text{times}\}\\
  \geq p^{n(m-1)L_0}>0.
$$
Let $p_0=p^{n(n-1)L_0}$ and consider Lemma \ref{robconspeci}, then we obtain the conclusion.
\end{proof}

\noindent{\it Proof of Proposition \ref{Suff_thm}:} For each $t\geq 0$ and any given $x(t)\in(-\infty,\infty)^n$, it is easy to check that there exist disjointed subsets $\mathcal{V}_k(t), k=1,\ldots,m (1\leq m\leq n)$ such that $\mathcal{V}=\bigcup_1^m\mathcal{V}_k(t)$ and each $\mathcal{G}_{\mathcal{V}_k}(t)=\{\mathcal{V}_k(t),\mathcal{E}_k(t)\}$ is either a complete graph or a connected but not complete graph. If $G_{\mathcal{V}}(0)$ is a complete graph, by Lemma \ref{robconspeci}, the conclusion holds. Otherwise, at each moment $t\geq 0$ consider the protocol (\ref{noiseproto0}) for all subsets $\mathcal{V}_k(t), 1\leq k\leq m$ with connected but not complete graphs.

If $\mathcal{G}_k(t)$ is a connected but complete graph, following the same argument below (\ref{noiseproto0}), we know that $d_{\mathcal{V}_c(t)}(t)$ reduces by at least $2a$ after each step under the protocol (\ref{noiseproto0}). If $\mathcal{G}_k(t)$ is a complete graph, by Lemma \ref{robconspeci}, we know that $d_{\mathcal{V}_c(t)}(t)\leq \epsilon$ before it meets another subgroup.
Since $d_{\mathcal{V}_k(t)}(t)\leq |\mathcal{V}_k(t)|\epsilon\leq n\epsilon$ when $\mathcal{G}_k(t)$ is a connected but not complete graph, we can get that under protocol (\ref{noiseproto0}), $\mathcal{G}_k(t)$ will become a complete graph after no more than $\lceil\frac{(n-1)\epsilon}{2a}\rceil$ steps. Considering that during this period two subgroups may meet and become a new connected but not a complete graph, we know that under the protocols (\ref{noiseproto0}), all subgroups will become complete graphs after no more than $\lceil\frac{(n-1)^2\epsilon}{2a}\rceil$ steps. By independence of $\xi_i(t), i\in\mathcal{V}, t>0$, we know that
\begin{equation*}
\begin{split}
 &P\Big\{\text{protocol}~ (\ref{noiseproto0})~ \text{occurs}~ \Big\lceil\frac{(n-1)^2\epsilon}{2a}\Big\rceil~\text{times}\Big\}\\
 \geq& p^{\lceil\frac{n(n-1)^2\epsilon}{2a}\rceil}>0,
\end{split}
\end{equation*}
implying for any given $x(0)\in(-\infty,\infty)^n$, there exists a constant $L\leq \lceil\frac{(n-1)^2\epsilon}{2a}\rceil$ such that
\begin{equation}\label{clusterfinitime}
\begin{split}
  &P\{G_\mathcal{V}(L)~\text{consists of complete graphs}\}\\
  \geq& p^{\lceil\frac{n(n-1)^2\epsilon}{2a}\rceil}>0.
  \end{split}
\end{equation}
Denote $C(L)=\{\omega:G_\mathcal{V}(L)~\text{consists of complete graphs}\}$, then by Lemma \ref{clusterconsen}, there exists a finite time $\bar{T}_1$ which is $\sigma(\xi(1),\ldots)$-measurable, and a constant $0<p_0<1$ such that
\begin{equation*}
\begin{split}
  P\{d_\mathcal{V}(L+\bar{T}_1)\leq\epsilon\}&=P\{d_\mathcal{V}(\bar{T}_1)\leq \epsilon|C(L)\}\cdot P\{C(L)\}\\
  &\geq p_0p^{\lceil\frac{n(n-1)^2\epsilon}{2a}\rceil}>0,
  \end{split}
\end{equation*}
and hence
\begin{equation}\label{probnocons}
  P\{d_\mathcal{V}(L+\bar{T}_1)>\epsilon\}\leq 1-p_0p^{\lceil\frac{n(n-1)^2\epsilon}{2a}\rceil}<1.
\end{equation}
For a finite time $T$, define $U(T)=\{\omega: d_\mathcal{V}(L+T)>\epsilon\}, U=\{\omega: (\ref{basicHKmodel})-(\ref{neigh})$ does not reach quasi-synchronization in finite time $\}$.
By (\ref{probnocons}),
$$
  P\{U(\bar{T}_1)\}\leq 1-p_0p^{\lceil\frac{n(n-1)^2\epsilon}{2a}\rceil}<1.
$$
Since $x(0)$ is arbitrarily given in $(-\infty,\infty)^n$, considering the independence of $\sigma(\bar{T}_1)$ and $\sigma(\xi(\bar{T}_1+1),\ldots)$ and following the procedure of (\ref{probnocons}), we know there exists a finite time sequence $\bar{T}_1\leq \bar{T}_2\leq \ldots< \infty$ such that
\begin{equation*}
  P\{U(\bar{T}_{m+1})|U(\bar{T}_m)\}\leq 1-p_0p^{\lceil\frac{n(n-1)^2\epsilon}{2a}\rceil},\quad m\geq 1.
\end{equation*}
Notice by Lemma \ref{robconspeci} that once there is a finite time $T$ that $d_\mathcal{V}(T)\leq \epsilon$, it will hold $d_\mathcal{V}\leq \delta_2-\delta_1\leq \epsilon$, thus $U(\bar{T}_{j+1})\subset U(\bar{T}_j), j\geq 1$ and hence
\begin{equation*}
\begin{split}
  P\{U\} \leq &P\Big\{\bigcap\limits_{m=1}^{\infty}U(\bar{T}_m)\Big\}=\lim\limits_{m\rightarrow \infty}P\Big\{\bigcap\limits_{j=1}^{m}U(\bar{T}_j)\Big\}\\
=&\lim\limits_{m\rightarrow \infty}\prod\limits_{j=1}^{m-1}P\Big\{U(\bar{T}_{j+1}\Big|\bigcap\limits_{l\leq j} U(\bar{T}_l)\Big\}\cdot P\{U(\bar{T}_1)\}\\
=&\lim\limits_{m\rightarrow \infty}\prod\limits_{j=1}^{m-1}P\{U(\bar{T}_{j+1}| U(\bar{T}_j)\}\cdot P\{U(\bar{T}_1)\}\\
\leq& \lim\limits_{m\rightarrow \infty}(1-p_0p^{\lceil\frac{n(n-1)^2\epsilon}{2a}\rceil})^m=0,
\end{split}
\end{equation*}
here the first equation holds since $\{\bigcap\limits_{j=1}^{m}U(\bar{T}_j), m\geq 1\}$ is a decreasing sequence and $P$ is a probability measure. As a result
\begin{equation*}
\begin{split}
  & P\{(\ref{basicHKmodel})-(\ref{neigh})~ \text{ reach quasi-synchronization in finite time}\}\\
  =& 1-P\{U\}=1.
\end{split}
\end{equation*}
This completes the proof. \hfill$\Box$

Next, we will present the necessary part of the noise-induced synchronization, which shows that when the noise amplitude has a positive probability of exceeding $\epsilon$, the system a.s. cannot reach quasi-synchronization.
\begin{prop}[Necessary condition for quasi-synchronization of the HK model with independent noise] \label{Nece_thm}
Let $x(0)\in (-\infty,\infty)^n$, $\epsilon>0$ are arbitrarily given.
Suppose the noises $\{\xi_i(t), i\in\mathcal{V},t\geq 1\}$ are independent and there exist constants $\delta_2-\delta_1\geq \epsilon$ and $0<q\leq 1$ such that
    $P\{\xi_i(t)<\delta_1\}\geq q$ and $P\{\xi_i(t)>\delta_2\}\geq q$,
then a.s. the system (\ref{basicHKmodel})-(\ref{neigh}) cannot reach quasi-synchronization.
\end{prop}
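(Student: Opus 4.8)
The plan is to prove the complementary statement $P\{d_{\mathcal{V}}>\epsilon\}=1$, by showing that the gap $d_{\mathcal{V}}(t)$ exceeds $\epsilon$ for infinitely many $t$ almost surely. The driving mechanism is that, no matter what the current configuration $x(t)$ is, the independence of the noise provides a probability bounded away from zero of pulling the two ``extreme'' agents more than $\epsilon$ apart in a single step; a conditional Borel--Cantelli argument then forces this separation to recur infinitely often, so that $\limsup_t d_{\mathcal{V}}(t)>\epsilon$.

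Write $\mathcal{F}_t=\sigma(\xi(1),\ldots,\xi(t))$, so that $x(t)$ and each local average $\widetilde{x}_i(t)$ are $\mathcal{F}_t$-measurable, whereas $\xi(t+1)$ is independent of $\mathcal{F}_t$. At each time $t$ I would select, measurably in $\mathcal{F}_t$, two distinct indices $i^*=i^*(t)$ and $j^*=j^*(t)$ attaining the largest and smallest local averages, so that $\widetilde{x}_{i^*}(t)\ge \widetilde{x}_{j^*}(t)$. Because $x_{i^*}(t+1)-x_{j^*}(t+1)=\big(\widetilde{x}_{i^*}(t)-\widetilde{x}_{j^*}(t)\big)+\big(\xi_{i^*}(t+1)-\xi_{j^*}(t+1)\big)$, on the event $E_t:=\{\xi_{i^*}(t+1)>\delta_2,\ \xi_{j^*}(t+1)<\delta_1\}$ the first bracket is nonnegative and the second exceeds $\delta_2-\delta_1\ge\epsilon$, so that $d_{\mathcal{V}}(t+1)\ge x_{i^*}(t+1)-x_{j^*}(t+1)>\epsilon$. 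The crucial step is the conditional lower bound $P(E_t\mid\mathcal{F}_t)\ge q^2$: decomposing over the $\mathcal{F}_t$-measurable values of the pair $(i^*,j^*)$ and using that for any fixed distinct $a\ne b$ the variables $\xi_a(t+1),\xi_b(t+1)$ are independent of $\mathcal{F}_t$ and of each other, one has $P(E_t\mid\mathcal{F}_t)\ge\inf_{a\ne b}P\{\xi_a(t+1)>\delta_2\}\,P\{\xi_b(t+1)<\delta_1\}\ge q\cdot q$.

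Setting $A_t:=\{d_{\mathcal{V}}(t+1)>\epsilon\}\supseteq E_t$, we obtain $P(A_t\mid\mathcal{F}_t)\ge q^2$ for every $t$, hence $\sum_t P(A_t\mid\mathcal{F}_t)=\infty$ surely. By the conditional (L\'evy) Borel--Cantelli lemma the events $A_t$ occur infinitely often almost surely, i.e. $d_{\mathcal{V}}(t)>\epsilon$ for infinitely many $t$, so $\limsup_t d_{\mathcal{V}}(t)\ge\epsilon$ a.s. When $\delta_2-\delta_1>\epsilon$ this upgrades at once to a strict bound, since on $E_t$ one in fact has $d_{\mathcal{V}}(t+1)>\delta_2-\delta_1$, giving $d_{\mathcal{V}}\ge\delta_2-\delta_1>\epsilon$ a.s., which is exactly the failure of quasi-synchronization in the sense of Definition \ref{robconsen}(iii).

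I expect two points to carry the real weight. The first is the random-index subtlety: $i^*(t),j^*(t)$ depend on the entire past, so $\xi_{i^*}(t+1)$ cannot be handled as a single fixed-coordinate noise; the remedy is to condition on $\mathcal{F}_t$, over which the indices become constants and the product bound $q^2$ is legitimate, and then to run the recurrence through the \emph{conditional} rather than the ordinary Borel--Cantelli lemma. The second, and I think genuinely the main obstacle, is extracting a \emph{strictly} positive margin in the boundary case $\delta_2-\delta_1=\epsilon$, where $E_t$ yields only $d_{\mathcal{V}}(t+1)>\epsilon$ with no uniform gap, so that a priori $\limsup_t d_{\mathcal{V}}(t)$ could equal $\epsilon$ rather than exceed it. Here I would enlarge the thresholds to $\delta_2+\eta,\delta_1-\eta$ for small $\eta>0$ and invoke right-continuity of the distribution functions together with the non-degeneracy of the noise to retain a uniform positive probability for the enlarged thresholds; the delicate issue is securing this uniformity in $(i,t)$ when the noises are merely independent rather than identically distributed, after which $\limsup_t d_{\mathcal{V}}(t)\ge\epsilon+\eta>\epsilon$ and thus $P\{d_{\mathcal{V}}\le\epsilon\}=0$ follow as before.
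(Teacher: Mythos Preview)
Your approach is essentially the paper's: both arguments show that, conditionally on the past, the diameter exceeds $\epsilon$ at the next step with probability at least $q^2$, and iterate to get $d_{\mathcal{V}}(t)>\epsilon$ infinitely often. The paper conditions on $\{d_{\mathcal{V}}(t)\le\epsilon\}$ (so the graph is complete and all local averages coincide) and telescopes the product $\prod P\{d_{\mathcal{V}}(t)\le\epsilon\mid\bigcap_{l<t}\{d_{\mathcal{V}}(l)\le\epsilon\}\}\le(1-q^2)^m\to0$, whereas you work at arbitrary configurations via the extreme local averages and invoke the conditional Borel--Cantelli lemma; these are cosmetic variants of the same mechanism.

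One remark: the concern you raise about the boundary case $\delta_2-\delta_1=\epsilon$ --- that $d_{\mathcal{V}}(t)>\epsilon$ i.o.\ only yields $\limsup_t d_{\mathcal{V}}(t)\ge\epsilon$, not $>\epsilon$, so $P\{d_{\mathcal{V}}\le\epsilon\}=0$ is not immediate --- is a genuine subtlety that the paper's proof does not address either; its opening line simply asserts that showing $P\{\bigcup_{T_0}\{d_{\mathcal{V}}(t)\le\epsilon,\ t>T_0\}\}=0$ suffices. So your discussion there is more careful than, not weaker than, the paper's own argument.
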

\begin{proof}
We only need to prove that, for any $T_0\geq 0$, there exists $t> T_0$ a.s. such that $d_\mathcal{V}(t)>\epsilon$ a.s., i.e.
\begin{equation*}
  P\Big\{\bigcup\limits_{T_0=0}^{\infty}\{d_\mathcal{V}(t)\leq\epsilon, t> T_0\}\Big\}=0.
\end{equation*}
Given any $T_0\geq 0$, by independence of $\xi_i(t), i\in\mathcal{V}, t\geq 1$, it has
\begin{equation*}
\begin{split}
  &P\{d_\mathcal{V}(T_0+1)>\epsilon|d_\mathcal{V}(T_0)\leq\epsilon\}\\
  \geq& P\Big\{\max\limits_{x_i(T_0),i\in\mathcal{V}}\xi_i(T_0+1)>\delta_2, \min\limits_{x_i(T_0),i\in\mathcal{V}}\xi_i(T_0+1)<\delta_1\Big\}\\
  \geq& q^2.
  \end{split}
\end{equation*}
Hence, $P\{d_\mathcal{V}(T_0+1)\leq \epsilon|d_\mathcal{V}(T_0)\leq\epsilon\}\leq 1-q^2<1$. Similarly, for all $t>T_0$
\begin{equation*}
  P\Big\{d_\mathcal{V}(t)\leq\epsilon\Big|\bigcap\limits_{T_0\leq l<t}\{d_\mathcal{V}(l)\leq\epsilon\}\Big\}\leq 1-q^2.
\end{equation*}
Noting $P\{d_\mathcal{V}(T_0)\leq \epsilon\}\leq 1$, it has
\begin{equation*}
  \begin{split}
 & P\{d_\mathcal{V}(t)\leq \epsilon, t> T_0\}=P\Big\{\bigcap\limits_{t=T_0+1}^\infty\{d_\mathcal{V}(t)\leq\epsilon\}\Big\}\\
 =&\lim\limits_{m\rightarrow\infty}P\Big\{\bigcap\limits_{t=T_0+1}^m\{d_\mathcal{V}(t)\leq\epsilon\}\Big\}\\
  \leq&\lim\limits_{m\rightarrow\infty}\prod\limits_{t=T_0+1}^mP\Big\{d_\mathcal{V}(t)\leq \epsilon\Big|\bigcap\limits_{T_0\leq l< t}\{d_\mathcal{V}(l)\leq\epsilon\}\Big\}\\
  \leq &\lim\limits_{m\rightarrow\infty}(1-q^2)^m=0.
  \end{split}
\end{equation*}
This completes the proof.
\end{proof}
Proposition \ref{Nece_thm} shows that when the noise amplitude has a positive probability to exceed the confidence threshold, the cluster will be destroyed by noise. Thus, the essence of noise-induced synchronization is that noise can drive the system towards a synchronized state, and the system is capable of maintaining that state. When the noise is large and exceeds a critical amplitude, the system is fluctuating severely such that the synchronized state cannot be maintained anymore.

\renewcommand{\thesection}{\Roman{section}}
\section{Simulations}\label{Simulations}
\renewcommand{\thesection}{\arabic{section}}
In this part, we will present some simulation results to verify the main theoretical results in this paper. First, we present a fragmentation of noise-free HK model. We take $n= 20, \epsilon=5$ and the initial states are randomly generated on $[0,50]$. Fig. \ref{noise0fig} shows the formation of four clusters. We then add independent noises which are uniformly distributed on $[\delta_1, \delta_2]$ to the agents. By Proposition \ref{Suff_thm}, when $0<\delta_2-\delta_1\leq \epsilon$, the system will almost surely achieve quasi-synchronization. Let $\delta_1=-2, \delta_2=2.1$, then Fig. \ref{noise1fig} clearly displays the quasi-synchronization picture. Next, we consider the case when the noise amplitude $\delta_2-\delta_1$ exceeds the critical value $\epsilon$. For a better demonstration, we simply show a synchronized system will divide in the presence of larger noise. After taking $n=10, x_i(0)=0, 1\leq i\leq 10$ and $\epsilon=5$. Let $\delta_1=-3, \delta_2=3.5$, our Fig. \ref{noise6fig} shows the clear separation of the system.

\begin{figure}[htp]
  \centering
  \includegraphics[width=2.5in]{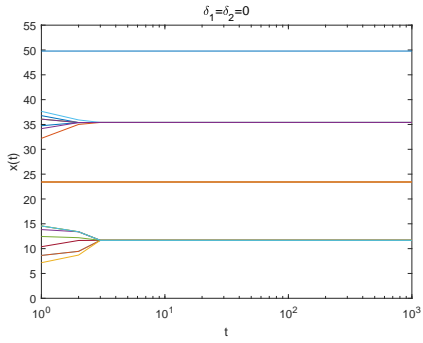}\\
  \caption{Evolution of system (\ref{basicHKmodel})-(\ref{neigh}) of 20 agents without noise. The initial system states are randomly generated on  $[0,50]$, confidence threshold $\epsilon=5$. }\label{noise0fig}
\end{figure}

\begin{figure}[htp]
  \centering
  \includegraphics[width=2.5in]{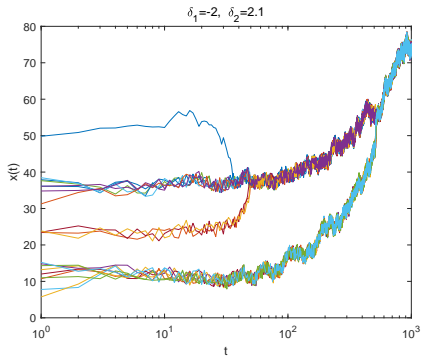}\\
  \caption{Evolution of system (\ref{basicHKmodel})-(\ref{neigh}) of 20 agents with noise uniformly distributed on $[-2, 2.1]$. The initial conditions are identical with those in Fig. \ref{noise0fig}, except that adding noises are uniformly distributed on $[-2, 2.1]$.  }\label{noise1fig}
\end{figure}

\begin{figure}[htp]
  \centering
  \includegraphics[width=2.5in]{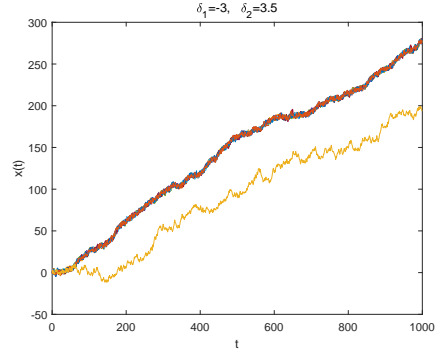}\\
  \caption{Evolution of system (\ref{basicHKmodel})-(\ref{neigh}) of 10 agents with noise uniformly distributed on $[-3,3.5]$. The initial system states are identically taken to be 0; the confidence threshold $\epsilon=5$.  }\label{noise6fig}
\end{figure}

\renewcommand{\thesection}{\Roman{section}}
\section{Conclusions}\label{Conclusions}
\renewcommand{\thesection}{\arabic{section}}
In this paper, we mainly established a rigorous theoretical analysis for noise-induced synchronization of the HK model in the full state-space. By investigating the graph property of the HK dynamics, we completely resolved this problem. Moreover, a critical noise amplitude for the induced synchronization is obtained. The analysis skill that we developed for the graph property of the HK model will provide further tools for studying synchronization problems in noisy HK-based dynamics. Moreover, given the flexible generalizability of our results, we hope our analysis will stimulate much further research on noise-induced synchronization phenomena in physical, biological, and social self-organizing systems.

\end{document}